 \theoremstyle{plain}
\newtheorem{thm}{Theorem}[section]
\newcommand{\lyxaddress}[1]{
\par {\raggedright #1
\vspace{1.4em}
\noindent\par}
}
  \theoremstyle{plain}
  \newtheorem{lem}[thm]{Lemma}
  \theoremstyle{remark}
  \newtheorem{rem}[thm]{Remark}
\newcommand*{\grad}{\operatorname{grad}}
\newcommand*{\dive}{\operatorname{div}}
\newcommand*{\Grad}{\operatorname{Grad}}
\newcommand*{\Div}{\operatorname{Div}}
\newcommand*{\ii}{\mathrm{i}}
\DeclareMathAccent{\Circ}{\mathalpha}{operators}{"17}
\newcommand{\interior}[1]{\Circ{#1}}
\renewcommand{\Re}{\operatorname{\mathfrak{Re}}}
\renewcommand{\tilde}{\widetilde}
\renewcommand*{\epsilon}{\varepsilon}
\renewcommand*{\rho}{\varrho}
\begin{document}

\title{On an Elasto-Acoustic Transmission Problem in Anisotropic, Inhomogeneous
Media.}

\author{R.~Picard}
\maketitle

\lyxaddress{\begin{center}
Department of Mathematics, TU Dresden, 01062 Dresden, Germany
\par\end{center}}
\begin{abstract}
\textbf{Abstract.} We consider a coupled system describing the interaction
between acoustic and elastic regions, where the coupling occurs not
via material properties but through an  interaction on an interface
separating the two regimes. Evolutionary well-posedness in the sense
of Hadamard well-posedness supplemented by causal dependence is shown
for a natural choice of generalized interface conditions. The results
are obtained in a real Hilbert space setting incurring no regularity
constraints on the boundary and almost none on the interface of the
underlying regions.
\end{abstract}

\section{Introduction}

Similarities between various initial boundary value problems of mathematical
physics have been noted as general observations throughout the literature.
Indeed, the work by K. O. Friedrichs, \cite{Ref166,CPA:CPA3160110306},
already showed that the classical linear phenomena of mathematical
physics belong \textendash{} in the static case \textendash{} to his
class of \emph{symmetric positive hyperbolic partial differential
equations}, later referred to as \emph{Friedrichs systems}, which
are of the abstract form

\begin{equation}
\left(M_{1}+A\right)u=f,\label{eq:pssym}
\end{equation}

with $A$ at least formally, i.e. on $C_{\infty}$-vector fields with
compact support in the underlying region $\Omega$, a skew-symmetric
differential operator and the $L^{\infty}$-matrix-valued multiplication-operator
$M_{1}$ satisfying the condition 
\[
\mathrm{sym}\left(M_{1}\right)\coloneqq\frac{1}{2}\left(M_{1}+M_{1}^{*}\right)\geq c>0
\]
for some real number $c$. Indeed, a typical choice of boundary condition
is, when $A$ is skew-selfadjoint\footnote{To assume $A$ to be skew-selfadjoint is less restrictive than one
might think. For this we note that for example typical dissipative
boundary conditions actually give rise to natural skew-selfadjoint
spatial operators $A$, \cite{Picard20164888}. That skew-selfadjoint
$A$ is a quite common assumption but may not be recognized. As a
typical example we consider the popular transcription of the wave
equation $\partial_{0}^{2}-\Delta_{D}$, where $\Delta_{D}$ denotes
the Laplacian with a homogeneous Dirichlet boundary condition in a
bounded domain $\Omega$, into a first order system of the form $\partial_{0}+A$,
where $A=\left(\begin{array}{cc}
0 & \Delta_{D}\\
1 & 0
\end{array}\right)$ is indeed skew-selfadjoint due to the standard choice of Hilbert
space setting. } ($A$ m-accretive would be sufficient). Problem (\ref{eq:pssym})
can be considered as the static problem associated with the dynamic
problem ($\partial_{0}$ denotes the time-derivative) 
\begin{equation}
\partial_{0}M_{0}+M_{1}+A\label{eq:FS}
\end{equation}

with $M_{0}$ selfadjoint $L^{\infty}$-multiplication-operator and
$M_{0}\geq0$, which were also addressed in \cite{CPA:CPA3160110306}.
It is noteworthy, that even the temporal exponential weight factor,
which plays a central role in the approach of \cite{Pi2009-1}, is
introduced as an ad-hoc formal trick to produce a suitable $M_{1}$
for a well-posed static problem. For the so-called time-harmonic case,
where $\partial_{0}$ is replaced by $\ii\omega$, $\omega\in\mathbb{R}$,
we replace $A$ simply by $\ii\omega M_{0}+A$ to arrive at a system
of the form (\ref{eq:pssym}).

Operators of the Friedrichs type (\ref{eq:FS}), can be generalized
to obtain a fully time-dependent theory allowing for operator-valued
coefficients, indeed, in the time-shift invariant case, for systems
of the general form 
\begin{equation}
\tag{{Evo-Sys}}\left(\partial_{0}M\left(\partial_{0}^{-1}\right)+A\right)U=F\label{eq:problem-2}
\end{equation}
where $A$ is \textendash{} for simplicity \textendash{} skew-selfadjoint
and $M$ an operator-valued \textendash{} say \textendash{} rational
function (regular at 0) as an abstract coefficient. The meaning of
the so-called material law operator $M\left(\partial_{0}^{-1}\right)$
is in terms of a suitable function calculus associated with the (normal)
operator $\partial_{0}$, \cite[Chapter 6]{PDE_DeGruyter}. This spacious
class of operators allows for a large class of material laws including
\textendash{} the recently of great interest \textendash{} \emph{meta-materials}. 

We shall refer to such systems as evo-systems (or evolutionary equations)
to distinguish them from the special subclass of classical (explicit)
evolution equations.

In this paper we intend to study a particular transmission problem
between two physical regimes, acoustics and elasto-dynamics, within
this general framework to establish its well-posedness, which for
evo-systems entails not only Hadamard well-posedness, i.e. \emph{uniqueness},
\emph{existence} and \emph{continuous dependence}, but also the crucial
property of \emph{causality}. For this we will only have to establish
the skew-selfadjointness of a suitably constructed operator $A$.
Then it is known that the requirement
\begin{equation}
\rho M\left(0\right)+\mathrm{sym}\left(M^{\prime}\left(0\right)\right)\geq c_{0}>0\label{eq:posdef-rat}
\end{equation}
for some number $c_{0}$ all sufficiently large $\rho\in]0,\infty[\,,$
yields the desired well-posedness, see the survey \cite{Picard2015}.
For the simple Friedrichs type case where we additionally assume 
\begin{equation}
M_{0}=M\left(0\right)\geq c_{0}>0\label{eq:posdef-M0}
\end{equation}
for some number $c_{0}\,,$ which clearly implies (\ref{eq:posdef-rat}),
we may even use the commonly invoked semi-group theory to establish
the desired well-posedness (note that in this case $M_{1}=M^{\prime}\left(0\right)$
and all higher derivatives of $M$ vanish). Indeed, under these strong
restrictions (\ref{eq:FS}) is congruent to
\begin{equation}
\partial_{0}+\sqrt{M_{0}^{-1}}M_{1}\sqrt{M_{0}^{-1}}+\sqrt{M_{0}^{-1}}A\sqrt{M_{0}^{-1}},\label{eq:sg-form}
\end{equation}
which amounts to having $M_{0}=1$ ($M_{1}$ replaced by the congruent
$\sqrt{M_{0}^{-1}}M_{1}\sqrt{M_{0}^{-1}}$) and using $\sqrt{M_{0}}U$
as the new unknown in the corresponding problem of the form (\ref{eq:problem-2}).
With $\sqrt{M_{0}^{-1}}A\sqrt{M_{0}^{-1}}$ inheriting its skew-selfadjointness
from $A$ we obtain indeed a one-parameter group $\left(\exp\left(t\sqrt{M_{0}^{-1}}A\sqrt{M_{0}^{-1}}\right)\right)_{t\in\mathbb{R}}$,
which by a simple perturbation argument yields a group $\left(U\left(t\right)\right)_{t\in\mathbb{R}}$
such that $\left(\chi_{_{[0,\infty[}}\left(t\right)U\left(t\right)\right)_{t\in\mathbb{R}}$,
with $\chi_{_{[0,\infty[}}$ denoting the characteristic function
of the interval $[0,\infty[\,,$ is the fundamental solution associated
with (\ref{eq:sg-form}). Thus a fairly general solution can be obtained
by convolution with this fundamental solution. Restricting this fundamental
solution to its support yields a continuous, one-parameter semigroup
$\left(U\left(t\right)\right)_{t\in[0,\infty[}\,.$ In any case we
are justified to focus on the underlying skew-selfadjointness of the
operator $A$ as a central feature to obtain well-posedness for a
large class of general material laws, since we shall be concerned
with the interaction between the elastic and the acoustic regimes
solely via the interface, not via material interactions through the
material law, as for example in piezo-electrics, compare e.g. \cite{MMA:MMA3866}
for a typical effect of the latter type. This specific focus also
allows us in the interest of brevity to by-pass the intricacies of
the time-dependent theory of \cite{Pi2009-1}.

Skew-selfadjointness of an operator $A$, i.e.
\begin{equation}
A=-A^{*},\label{eq:skew}
\end{equation}
 in a real Hilbert space $H$ results in
\[
\left\langle u|Au\right\rangle _{H}=0
\]
for all $u\in D\left(A\right)$. Moreover, in typical cases skew-selfadjointness
of $A$ is a simple consequence of $A$ being congruent to a block
matrix of the form 
\[
\left(\begin{array}{cc}
0 & -C^{*}\\
C & 0
\end{array}\right),
\]
where $C:D\left(C\right)\subseteq H_{0}\to H_{1}$ is a closed, densely
defined, linear operator between real Hilbert spaces $H_{0},\:H_{1}$,
which is clearly skew-selfadjoint in the direct sum Hilbert space
$H=H_{0}\oplus H_{1}$. 

The interest of studying the coupling between acoustic and elasticity
wave phenomena has a relatively long history in the engineering community,
with \cite{10.2307/93996}, \cite{doi:10.1121/1.1916256}, being earlier
references. Originally motivated by submarine noise propagation, this
coupling is also of interest in connection with loudspeaker and hearing
aid design, as well as non-destructive testing. Near the close of
the last century there has been a rekindled interest in these specific
issues, \cite{doi:10.1121/1.397156}, \cite{Luke:1995:FIA:214875.214884}.
More recent publications are the numerical investigations \cite{NME:NME1669},
\cite{WILCOX20109373}, \cite{SannaMonkola2011}, and the more mathematically
oriented \cite{0963.35043}, \cite{zbMATH05185539}, \cite{KANG2017686},
\cite{doi:10.1137/16M1090326}, just to mention a few. Here we want
to transcend the predominant constant coefficient and \textendash{}
with the notable exception of \cite{doi:10.1137/16M1090326} \textendash{}
largely time-harmonic analysis and consider the time-dependent case
in anisotropic, inhomogeneous media. Since we shall consider operator
coefficients, this also includes media with non-local behavior. For
sake of accessibility we restrict our attention to the autonomous
case with classical block-diagonal material laws and no memory effects.
We use a functional-analytical setting in real Hilbert space to obtain
a well-posedness for this elasto-acoustic transmission problem.

We shall first establish the spatial operator of acoustics and elasticity,
respectively, as intimately related skew-selfadjoint operators (mother-descendant
mechanism) in a real Hilbert space framework based on the above-mentioned
block structure with suitably introduced operators $C$. Then, in
Section \ref{sec:An-Interface-Coupling} we apply these observations
to a particular interface coupling problem between the two regimes
in adjacent regions via a refined mother-descendant mechanism. We
emphasize that our setup allows for arbitrary open sets as underlying
domains with no additional constraints on boundary regularity and
almost no constraints on interface regularity. Indeed we only require
the interface to be a Lebesgue null set. The induced homogeneous boundary
value constraints and transmission conditions are encoded \textendash{}
as customary \textendash{} in suitable generalization as containment
in the domain of the operator.

\section{The Connection of the Spatial Operators of Acoustics and Elasticity}

\subsection{Basic Ideas}

Without loss of generality we may and will assume that all Hilbert
spaces used in the following are real\footnote{Note that every complex Hilbert space $X$ is a real Hilbert space
choosing only real numbers as multipliers and 
\[
\left(\phi,\psi\right)\mapsto\Re\left\langle \phi|\psi\right\rangle _{X}
\]
as new inner product. Note that with this choice $\phi$ and $\ii\phi$
are always orthogonal. Moreover, for any skew-symmetric operator $A$
we have
\[
x\perp Ax
\]
for all $x\in D\left(A\right)$. 

Indeed, since $\left\langle x|y\right\rangle -\left\langle y|x\right\rangle =0$
(symmetry) we have
\[
\left\langle x|Ax\right\rangle -\left\langle Ax|x\right\rangle =0
\]
or by skew-symmetry
\begin{eqnarray*}
0 & = & \left\langle x|Ax\right\rangle -\left\langle Ax|x\right\rangle \\
 & = & 2\left\langle x|Ax\right\rangle 
\end{eqnarray*}
for all $x\in D\left(A\right)$. }.

In many practical cases the desired skew-selfadjointness of the spatial
operator $A$ is evident from its structure as a block operator matrix
of the form
\[
A=\left(\begin{array}{cc}
0 & -C^{*}\\
C & 0
\end{array}\right),
\]
with $H=H_{0}\oplus H_{1}$ and $C:D\left(C\right)\subseteq H_{0}\to H_{1}$
a closed, densely defined, linear operator. We shall start our exploration
by focussing for simplicity and definiteness on the Cartesian situation
and on the case of the so-called Dirichlet boundary condition. For
this, we initially take $C$ as the closure $\interior\grad$ of the
classical differential operator
\begin{align*}
\interior C_{1}\left(\Omega,\mathbb{R}^{3}\right)\subseteq L^{2}\left(\Omega,\mathbb{R}^{3}\right) & \to L^{2}\left(\Omega,\mathbb{R}^{3\times3}\right),\\
u & \mapsto u^{\prime},
\end{align*}
where $u^{\prime}$ is the derivative (in matrix language the Jacobian)
of the vectorfield $u$. The negative adjoint is the weak extension
of the classical divergence operator on matrix fields
\[
\dive\coloneqq-\left(\interior\grad\right)^{*}.
\]
Thus, the operator of our initial interest is 
\[
A=\left(\begin{array}{cc}
0 & \dive\\
\interior\grad & 0
\end{array}\right)
\]
as a skew-selfadjoint operator in $H=L^{2}\left(\Omega,\mathbb{R}^{3}\right)\oplus L^{2}\left(\Omega,\mathbb{R}^{3\times3}\right).$
Here $\mathbb{R}^{3\times3}$ is equipped with the standard Frobenius
inner product. As an illustration let us consider
\[
\left(\partial_{0}\left(\begin{array}{cc}
\rho_{*} & 0\\
0 & C^{-1}
\end{array}\right)+\left(\begin{array}{cc}
0 & -\dive\\
-\interior\grad & 0
\end{array}\right)\right)\left(\begin{array}{c}
v\\
T
\end{array}\right)=\left(\begin{array}{c}
f\\
g
\end{array}\right)
\]
as an associated dynamic problem for finding a solution $\left(\begin{array}{c}
v\\
T
\end{array}\right)\in L^{2}\left(\Omega,\mathbb{R}^{3}\right)\oplus L^{2}\left(\Omega,\mathbb{R}^{3\times3}\right).$ Here $\rho_{*}:L^{2}\left(\Omega,\mathbb{R}^{3}\right)\to L^{2}\left(\Omega,\mathbb{R}^{3}\right)$,
and $C:L^{2}\left(\Omega,\mathbb{R}^{3\times3}\right)\to L^{2}\left(\Omega,\mathbb{R}^{3\times3}\right)$
are assumed to be strongly positive definite mappings in order to
obtain well-posedness in the sense of our introductory exposition.
This type of system can be understood as modeling asymmetric elasticity
theory in the sense of \cite{zbMATH03315043,zbMATH03420318,Nowacki1986}. 

\subsection{Symmetric Elasticity as a Descendant of Asymmetric Elasticity.}

To illustrate the mother-descendant mechanism, as introduced in \cite{Mother2012,Mother2013},
see also \cite{ZAMM:ZAMM201300297}, we first perform the transition
to classical (symmetric) elasticity using this concept. 

We recall from \cite{zbMATH06250993} the following simple but crucial
lemma.
\begin{lem}
Let $C:D\left(C\right)\subseteq H\to Y$ be a closed densely-defined
linear operator between Hilbert spaces $H,\:Y$. Moreover, let $B:Y\to X$
be a continuous linear operator into another Hilbert space $X$. If
$C^{*}B^{*}$ is densely defined, then
\[
\overline{BC}=\left(C^{*}B^{*}\right)^{*}.
\]
\end{lem}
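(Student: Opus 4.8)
The plan is to reduce the statement to two standard facts about (possibly unbounded) operators in Hilbert space: first, for a densely defined operator $T$ one has $T^{**}=\overline{T}$ precisely when $T^{*}$ is densely defined (equivalently, when $T$ is closable); second, the behaviour of adjoints under composition with a bounded everywhere-defined factor. So the very first step is to note that $BC$ has domain $D(BC)=D(C)$, since $B$ is defined on all of $Y$ and continuous; as $D(C)$ is dense in $H$, the operator $BC$ is densely defined and $(BC)^{*}$ exists.

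The core of the argument is the identity $(BC)^{*}=C^{*}B^{*}$, understood as equality of operators (i.e. of graphs), not merely an inclusion. The inclusion $C^{*}B^{*}\subseteq(BC)^{*}$ is immediate: for $v\in D(C^{*}B^{*})$, i.e. $B^{*}v\in D(C^{*})$, and any $u\in D(C)$ we compute $\langle BCu|v\rangle_{X}=\langle Cu|B^{*}v\rangle_{Y}=\langle u|C^{*}B^{*}v\rangle_{H}$, the first equality by continuity of $B$ (no domain caveat, since $B^{*}$ is everywhere defined), the second by definition of $C^{*}$. For the reverse inclusion, let $v\in D((BC)^{*})$ with $(BC)^{*}v=h$. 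Then for every $u\in D(C)$ we have $\langle Cu|B^{*}v\rangle_{Y}=\langle BCu|v\rangle_{X}=\langle u|h\rangle_{H}$; but this relation, valid for all $u$ in the dense domain $D(C)$, is exactly the statement that $B^{*}v\in D(C^{*})$ and $C^{*}(B^{*}v)=h$. Hence $v\in D(C^{*}B^{*})$ and $C^{*}B^{*}v=h=(BC)^{*}v$, so the two operators coincide.

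With $(BC)^{*}=C^{*}B^{*}$ established, the hypothesis that $C^{*}B^{*}$ is densely defined says precisely that $(BC)^{*}$ is densely defined; therefore $BC$ is closable, and $(C^{*}B^{*})^{*}=((BC)^{*})^{*}=(BC)^{**}=\overline{BC}$, which is the assertion.

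I expect the only delicate point to be insisting that the computation of $(BC)^{*}$ yields an \emph{equality} and not just the inclusion $C^{*}B^{*}\subseteq(BC)^{*}$ that would hold for a general unbounded second factor. What makes it work is that $B$ is bounded and everywhere defined, so $B^{*}$ is bounded and everywhere defined and the identity $\langle By|x\rangle=\langle y|B^{*}x\rangle$ carries no hidden domain restriction; this is exactly what prevents the usual loss of domain when composing unbounded operators. It is also worth noting that closedness of $C$ is not actually used for this lemma — only dense definedness of $C$, needed to form $C^{*}$.
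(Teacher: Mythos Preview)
Your proof is correct and follows essentially the same route as the paper's: you establish the equality $(BC)^{*}=C^{*}B^{*}$ by showing both inclusions (the reverse one via the observation that $B^{*}$ is everywhere defined), and then take adjoints using the dense-definedness hypothesis. Your presentation is slightly more explicit about why $BC$ is densely defined and closable, and your closing remark that closedness of $C$ is not actually needed is a valid observation not made in the paper.
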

\begin{proof}
It is 
\[
C^{*}B^{*}\subseteq\left(BC\right)^{*}.
\]
If $\phi\in D\left(\left(BC\right)^{*}\right)$ then 
\[
\left\langle BCu|\phi\right\rangle _{X}=\left\langle u|\left(BC\right)^{*}\phi\right\rangle _{H}
\]
for all $u\in D\left(C\right)$. Thus, we have
\[
\left\langle Cu|B^{*}\phi\right\rangle _{Y}=\left\langle BCu|\phi\right\rangle _{X}=\left\langle u|\left(BC\right)^{*}\phi\right\rangle _{H}
\]
for all $u\in D\left(C\right)$ and we read off that $B^{*}\phi\in D\left(C^{*}\right)$
and 
\[
C^{*}B^{*}\phi=\left(BC\right)^{*}\phi.
\]
Thus we have
\[
\left(BC\right)^{*}=C^{*}B^{*}.
\]
If now $C^{*}B^{*}$ is densely defined, we have for its adjoint operator
\[
\left(C^{*}B^{*}\right)^{*}=\overline{BC}.
\]
\end{proof}
As a consequence we have that the \emph{descendant}
\[
\overline{\left(\begin{array}{cc}
1 & 0\\
0 & B
\end{array}\right)\left(\begin{array}{cc}
0 & -C^{*}\\
C & 0
\end{array}\right)}\left(\begin{array}{cc}
1 & 0\\
0 & B^{*}
\end{array}\right)=\left(\begin{array}{cc}
0 & -C^{*}B^{*}\\
\overline{BC} & 0
\end{array}\right)
\]
indeed inherits its skew-selfadjointness from its \emph{mother} $\left(\begin{array}{cc}
0 & -C^{*}\\
C & 0
\end{array}\right)$ (with $C$ replaced by $\overline{BC}$).
\begin{rem}
Clearly, the role of the components can be interchanged so that
\[
\overline{\left(\begin{array}{cc}
D & 0\\
0 & 1
\end{array}\right)\left(\begin{array}{cc}
0 & -C^{*}\\
C & 0
\end{array}\right)}\left(\begin{array}{cc}
D^{*} & 0\\
0 & 1
\end{array}\right)=\left(\begin{array}{cc}
0 & -DC^{*}\\
CD^{*} & 0
\end{array}\right)
\]
with $D:H\to Y$ such that $CD^{*}$ is densely defined, is also a
valid descendant construction.

These construction can be combined. In general, a repeated application
of the mother-descendant mechanism may, however, depend on the order
in which they are carried out. This fact has been overlooked in \cite{Mother2013}.
An illuminating example is choosing $C$ as the weak $L^{2}\left(\mathbb{R}\right)-$derivative
$\partial$ and $B=D$ as the cut-off by the characteristic function
$\chi_{_{]-1/2,1/2[}}$ of the symmetric unit interval $]-1/2,1/2[$
yielding
\begin{equation}
\left(\begin{array}{cc}
0 & \overline{\chi_{_{]-1/2,1/2[}}\left(\partial\chi_{_{]-1/2,1/2[}}\right)}\\
\overline{\chi_{_{]-1/2,1/2[}}\partial}\chi_{_{]-1/2,1/2[}} & 0
\end{array}\right)\label{eq:D2}
\end{equation}
if first the construction with $B$ and then with $D$ is carried
out. In reverse order we obtain
\begin{equation}
\left(\begin{array}{cc}
0 & \overline{\chi_{_{]-1/2,1/2[}}\partial}\chi_{_{]-1/2,1/2[}}\\
\overline{\chi_{_{]-1/2,1/2[}}\left(\partial\chi_{_{]-1/2,1/2[}}\right)} & 0
\end{array}\right).\label{eq:D1}
\end{equation}
In comparison (\ref{eq:D2}) models vanishing at $\pm\frac{1}{2}$
for the second component, whereas (\ref{eq:D1}) leads to vanishing
at $\pm\frac{1}{2}$ of the first component. 
\end{rem}
As a convenient mother operator to start from we take the above-mentioned
theory of asymmetric elasticity of Nowacki, \cite{zbMATH03315043,Nowacki1986}.
Indeed, classical (symmetric) elasticity theory can be considered
as a descendant in the above sense of the form
\begin{align}
 & \left(\begin{array}{cc}
0 & -\Div\\
-\interior\Grad & 0
\end{array}\right),\label{eq:sym-elast}
\end{align}
where 
\[
\interior\Grad\coloneqq\overline{\iota_{\mathrm{sym}}^{*}\interior\grad}
\]
and 
\[
\Div\coloneqq\dive\iota_{\mathrm{sym}}
\]
with 
\begin{align*}
\iota_{\mathrm{sym}}:L^{2}\left(\Omega,\mathrm{sym}\left[\mathbb{R}^{3\times3}\right]\right) & \to L^{2}\left(\Omega,\mathbb{R}^{3\times3}\right),\\
T & \mapsto T,
\end{align*}
where $\mathrm{sym}\left[\mathbb{R}^{3\times3}\right]$ denotes the
image of $\mathbb{R}^{3\times3}$ under the mapping $\mathrm{sym}$,
i.e. we have in the descendant construction $B=\iota_{\mathrm{sym}}^{*}$.
Note that 
\[
\iota_{\mathrm{sym}}^{*}T=\mathrm{sym}\left(T\right)
\]
for all $T\in L^{2}\left(\Omega,\mathbb{R}^{3\times3}\right)$.

\subsection{Acoustics as a Descendant of Asymmetric Elasticity.}

The spatial operator used in the acoustics model can also be introduced
as a descendant of asymmetric elasticity. It is actually the scalar
version corresponding to the asymmetric elasticity case.

Indeed, classical acoustics can be considered as a descendant of the
form
\begin{align*}
 & \left(\begin{array}{cc}
0 & \grad\\
\interior\dive & 0
\end{array}\right),
\end{align*}
where we re-use the classical notations by letting
\[
\interior\dive\coloneqq\overline{\mathrm{trace}\;\interior\grad}
\]
and 
\[
\grad\coloneqq\dive\:\mathrm{trace}^{*}
\]
with 
\begin{align*}
\mathrm{trace}:L^{2}\left(\Omega,\mathbb{R}^{3\times3}\right) & \to L^{2}\left(\Omega,\mathbb{R}\right),\\
T=\left(T_{ij}\right)_{i,j} & \mapsto\mathrm{trace}\:T\coloneqq\sum_{i}T_{ii},
\end{align*}
i.e. $B=\mathrm{trace}$. Note that 
\[
\mathrm{trace}^{*}p=\left(\begin{array}{ccc}
p & 0 & 0\\
0 & p & 0\\
0 & 0 & p
\end{array}\right)
\]
for all $p\in L^{2}\left(\Omega,\mathbb{R}\right)$.
\begin{rem}
The acoustic system can also be constructed by applying $B=\mathrm{trace}$
to the symmetric elasticity operator (\ref{eq:sym-elast}). Note that
the pressure distribution $p$ is in both cases obtained from the
stress as 
\[
p\coloneqq-\mathrm{trace}\,T.
\]
\end{rem}

\section{\label{sec:An-Interface-Coupling}An Interface Coupling Between Acoustics
and Elasticity}

We will now combine the two descendant constructions above to obtain
an interface coupling set-up for the skew-selfadjoint operator $A$.
We assume $\Omega_{0}\cup\Omega_{1}\subseteq\Omega$, such that the
orthogonal decomposition\footnote{Consequently, we also have 
\begin{eqnarray*}
L^{2}\left(\Omega,\mathbb{R}^{3\times3}\right) & = & L^{2}\left(\Omega_{0},\mathbb{R}^{3\times3}\right)\oplus L^{2}\left(\Omega_{1},\mathbb{R}^{3\times3}\right),\\
L^{2}\left(\Omega,\mathbb{R}^{3}\right) & = & L^{2}\left(\Omega_{0},\mathbb{R}^{3}\right)\oplus L^{2}\left(\Omega_{1},\mathbb{R}^{3}\right).
\end{eqnarray*}
} 
\begin{eqnarray}
L^{2}\left(\Omega,\mathbb{R}\right) & = & L^{2}\left(\Omega_{0},\mathbb{R}\right)\oplus L^{2}\left(\Omega_{1},\mathbb{R}\right)\label{eq:deco-v}
\end{eqnarray}
holds.

Then, with the respective canonical embeddings into $L^{2}\left(\Omega,\mathbb{R}^{3\times3}\right)$
we obtain 
\begin{eqnarray*}
B:L^{2}\left(\Omega,\mathbb{R}^{3\times3}\right) & \to & L^{2}\left(\Omega_{0},\mathrm{sym}\left[\mathbb{R}^{3\times3}\right]\right)\oplus L^{2}\left(\Omega_{1},\mathbb{R}\right),\\
T & \mapsto & \left(\begin{array}{c}
\iota_{L^{2}\left(\Omega_{0},\mathrm{sym}\left[\mathbb{R}^{3\times3}\right]\right)}^{*}\iota_{\mathrm{sym}}^{*}T\\
-\iota_{L^{2}\left(\Omega_{1},\mathbb{R}\right)}^{*}\mathrm{trace}\:T
\end{array}\right),
\end{eqnarray*}
and so

\[
B=\left(\begin{array}{c}
\iota_{L^{2}\left(\Omega_{0},\mathrm{sym}\left[\mathbb{R}^{3\times3}\right]\right)}^{*}\iota_{\mathrm{sym}}^{*}\\
-\iota_{L^{2}\left(\Omega_{1},\mathbb{R}\right)}^{*}\;\mathrm{trace}
\end{array}\right).
\]

With  this we get as a descendant construction
\begin{eqnarray}
A & = & \overline{\left(\begin{array}{cc}
1 & 0\\
0 & B
\end{array}\right)\left(\begin{array}{cc}
0 & -\dive\\
-\interior\grad & 0
\end{array}\right)}\left(\begin{array}{cc}
1 & 0\\
0 & B^{*}
\end{array}\right)\label{eq:incl}\\
 &  & \subseteq\left(\begin{array}{cc}
0 & \left(\begin{array}{cc}
-\Div_{\Omega_{0}} & -\grad_{\Omega_{1}}\end{array}\right)\\
\left(\begin{array}{c}
-\Grad_{\Omega_{0}}\\
\dive_{\Omega_{1}}
\end{array}\right) & \left(\begin{array}{cc}
0 & 0\\
0 & 0
\end{array}\right)
\end{array}\right)\label{eq:incl-1}
\end{eqnarray}
and 
\begin{align*}
M\left(\partial_{0}^{-1}\right)=M\left(0\right) & =\left(\begin{array}{cc}
\rho_{*,\Omega_{0}}+\kappa_{\Omega_{1}}^{-1} & \left(\begin{array}{cc}
0 & 0\end{array}\right)\\
\left(\begin{array}{c}
0\\
0
\end{array}\right) & \left(\begin{array}{cc}
C_{\Omega_{0}}^{-1} & 0\\
0 & c_{\Omega_{1}}
\end{array}\right)
\end{array}\right).
\end{align*}
The indices $\Omega_{k}$, $k=0,1,$ are used to denote the respective
supports of the quantities. The unknowns are now of the form 
\[
\left(\begin{array}{c}
v_{\Omega_{0}}+v_{\Omega_{1}}\\
\left(\begin{array}{c}
T_{\Omega_{0}}\\
p_{\Omega_{1}}
\end{array}\right)
\end{array}\right)\in H=L^{2}\left(\Omega,\mathbb{R}^{3}\right)\oplus\left(L^{2}\left(\Omega_{0},\mathrm{sym}\left[\mathbb{R}^{3\times3}\right]\right)\oplus L^{2}\left(\Omega_{1},\mathbb{R}\right)\right),
\]
where the first component is to be understood in the sense of (\ref{eq:deco-v}).
From the inclusion (\ref{eq:incl}),(\ref{eq:incl-1}), we read off
that the resulting evo-system 
\begin{equation}
\left(\partial_{0}M\left(0\right)+A\right)\left(\begin{array}{c}
v_{\Omega_{0}}+v_{\Omega_{1}}\\
\left(\begin{array}{c}
T_{\Omega_{0}}\\
p_{\Omega_{1}}
\end{array}\right)
\end{array}\right)=\left(\begin{array}{c}
f_{\Omega_{0}}+f_{\Omega_{1}}\\
\left(\begin{array}{c}
F_{\Omega_{0}}\\
g_{\Omega_{1}}
\end{array}\right)
\end{array}\right)\label{eq:evo-mix}
\end{equation}
indeed yields
\[
\partial_{0}\left(\rho_{*,\Omega_{0}}+\kappa_{\Omega_{1}}^{-1}\right)\left(v_{\Omega_{0}}+v_{\Omega_{1}}\right)+\Div_{\Omega_{0}}T_{\Omega_{0}}+\grad_{\Omega_{1}}p_{\Omega_{1}}=f_{\Omega_{0}}+f_{\Omega_{1}},
\]
which in turn \textendash{} according to (\ref{eq:deco-v}) \textendash{}
splits into equations in $\Omega_{0}$ and in $\Omega_{1}$
\begin{eqnarray*}
\partial_{0}\rho_{*,\Omega_{0}}v_{\Omega_{0}}-\Div_{\Omega_{0}}T_{\Omega_{0}} & = & f_{\Omega_{0}},\\
\partial_{0}\kappa_{\Omega_{1}}^{-1}v_{\Omega_{1}}+\grad_{\Omega_{1}}p_{\Omega_{1}} & = & f_{\Omega_{1}}.
\end{eqnarray*}
The second block row yields another pair of equations
\begin{eqnarray*}
\partial_{0}C^{-1}T_{\Omega_{0}}-\Grad v_{\Omega_{0}} & = & F_{\Omega_{0}},\\
\partial_{0}c_{\Omega_{1}}p_{\Omega_{1}}+\dive v_{\Omega_{1}} & = & g_{\Omega_{1}}.
\end{eqnarray*}
The actual system models now generalize natural transmission conditions
on the common boundary part $\dot{\Omega}_{0}\cap\dot{\Omega}_{1}$
and the homogeneous Dirichlet boundary condition on $\dot{\Omega}_{0}\setminus\dot{\Omega}_{1}$
and the standard homogeneous Neumann boundary condition on $\dot{\Omega}_{1}\setminus\dot{\Omega}_{0}$
without assuming any smoothness of the boundary via containment of
the solution $U=\left(\begin{array}{c}
v_{\Omega_{0}}+v_{\Omega_{1}}\\
\left(\begin{array}{c}
T_{\Omega_{0}}\\
p_{\Omega_{1}}
\end{array}\right)
\end{array}\right)$ in the operator domain $D\left(\overline{\partial_{0}M\left(0\right)+A}\right)$.
Since we do not have maximal regularity in this case, this does not
mean that $U\in D\left(A\right)$, but we do have
\[
\partial_{0}^{-1}U\in D\left(A\right)
\]
as a form of expressing \emph{generalized} boundary constraints and
and transmission conditions. 

If, however, we assume sufficient regularity of the boundary and solution
one can easily motivate that the model yields a generalization of
classical transmission conditions on $\dot{\Omega}_{0}\cap\dot{\Omega}_{1}$.
Indeed, with 
\[
\left(\begin{array}{c}
v_{\Omega_{0}}+v_{\Omega_{1}}\\
\left(\begin{array}{c}
T_{\Omega_{0}}\\
p_{\Omega_{1}}
\end{array}\right)
\end{array}\right)\in D\left(A\right)
\]
we have (noting for the smooth exterior unit normal vector fields
$n_{\dot{\Omega}_{0}}$, $n_{\dot{\Omega}_{1}}$ on the boundaries
of $\Omega_{0}$ and $\Omega_{1}$, respectively, that $n_{\dot{\Omega}_{0}}=-n_{\dot{\Omega}_{1}}$
on $\dot{\Omega}_{0}\cap\dot{\Omega}_{1}$) with 
\[
\tilde{A}=\left(\begin{array}{cc}
0 & \left(-\begin{array}{cc}
\Div_{\Omega_{0}} & \grad_{\Omega_{1}}\end{array}\right)\\
\left(\begin{array}{c}
-\Grad_{\Omega_{0}}\\
\dive_{\Omega_{1}}
\end{array}\right) & \left(\begin{array}{cc}
0 & 0\\
0 & 0
\end{array}\right)
\end{array}\right),
\]
\begin{eqnarray*}
0 & = & \left\langle \left(\begin{array}{c}
v_{\Omega_{0}}+v_{\Omega_{1}}\\
\left(\begin{array}{c}
T_{\Omega_{0}}\\
p_{\Omega_{1}}
\end{array}\right)
\end{array}\right)\Big|\tilde{A}\left(\begin{array}{c}
v_{\Omega_{0}}+v_{\Omega_{1}}\\
\left(\begin{array}{c}
T_{\Omega_{0}}\\
p_{\Omega_{1}}
\end{array}\right)
\end{array}\right)\right\rangle \\
 & = & -\left\langle v_{\Omega_{0}}\Big||\:\Div\:T_{\Omega_{0}}\right\rangle -\left\langle T_{\Omega_{0}}\Big|\Grad_{\Omega_{0}}v_{\Omega_{0}}\right\rangle +\\
 &  & +\left\langle p_{\Omega_{1}}\Big|\dive_{\Omega_{1}}v_{\Omega_{1}}\right\rangle +\left\langle v_{\Omega_{1}}\Big|\grad_{\Omega_{1}}p_{\Omega_{1}}\right\rangle \\
 & = & -\int_{\dot{\Omega}_{0}\cap\dot{\Omega}_{1}}v_{\Omega_{0}}^{\top}T_{\Omega_{0}}n_{\dot{\Omega}_{0}}do+\int_{\dot{\Omega}_{0}\cap\dot{\Omega}_{1}}n_{\dot{\Omega}_{1}}^{\top}\left(p_{\Omega_{1}}v_{\Omega_{1}}\right)do\\
 & = & -\int_{\dot{\Omega}_{0}\cap\dot{\Omega}_{1}}v_{\Omega_{0}}^{\top}T_{\Omega_{0}}n_{\dot{\Omega}_{0}}do-\int_{\dot{\Omega}_{0}\cap\dot{\Omega}_{1}}v_{\Omega_{1}}^{\top}\left(p_{\Omega_{1}}n_{\dot{\Omega}_{0}}\right)do.
\end{eqnarray*}
Since $\left(v_{\Omega_{0}}+v_{\Omega_{1}}\right)\in D\left(\interior\Grad\right)$
is by construction admissible we may assume that $v_{\Omega_{0}}=v_{\Omega_{1}}$
on the interface and conclude that
\begin{equation}
T_{\Omega_{0}}n_{\dot{\Omega}_{0}}+p_{\Omega_{1}}n_{\dot{\Omega}_{0}}=0\label{eq:trans1}
\end{equation}
 is a needed transmission condition. In particular, we see that 
\[
n_{\dot{\Omega}_{0}}\times T_{\Omega_{0}}n_{\dot{\Omega}_{0}}=0.
\]
Inserting the explicit transmission condition (\ref{eq:trans1}) now
yields
\begin{eqnarray*}
0 & = & \int_{\dot{\Omega}_{0}\cap\dot{\Omega}_{1}}\left(v_{\Omega_{0}}-v_{\Omega_{1}}\right)^{\top}\left(p_{\Omega_{1}}n_{\dot{\Omega}_{0}}\right)do\\
 & = & \int_{\dot{\Omega}_{0}\cap\dot{\Omega}_{1}}p_{\Omega_{1}}n_{\dot{\Omega}_{0}}^{\top}\left(v_{\Omega_{0}}-v_{\Omega_{1}}\right)do
\end{eqnarray*}
which, with $p_{\Omega_{1}}$ being arbitrary, now implies
\[
n_{\dot{\Omega}_{0}}^{\top}v_{\Omega_{0}}=n_{\dot{\Omega}_{0}}^{\top}v_{\Omega_{1}}
\]
i.e. the continuity of the normal components 
\[
v_{\Omega_{0},\mathrm{n}}=v_{\Omega_{1},\mathrm{n}},
\]
as a complementing transmission condition. These more or less heuristic
considerations motivate to take the above evo-system as a appropriate
generalization to cases, where the boundary does \emph{not} have a
reasonable normal vector field.

All in all, we summarize our findings in the following well-posedness
result.
\begin{thm}
If $\rho_{*,\Omega_{0}},C_{\Omega_{0}}$ and $\kappa_{\Omega_{1}},c_{\Omega_{1}}$
are selfadjoint, strictly positive definite, continuous operators
on $L^{2}\left(\Omega_{0},\mathbb{R}^{3}\right)$, $L^{2}\left(\Omega_{0},\mathrm{sym}\left[\mathbb{R}^{3\times3}\right]\right)$,
and on $L^{2}\left(\Omega_{1},\mathbb{R}^{3}\right)$, $L^{2}\left(\Omega_{1},\mathbb{R}\right)$,
respectively, the evo-system (\ref{eq:evo-mix}) is Hadamard well-posed.
Moreover, the solution depends causally on the data. \newpage{}
\end{thm}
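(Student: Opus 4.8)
The plan is to reduce the assertion to the two abstract facts isolated in the introduction: skew-selfadjointness of the spatial operator $A$ given by the descendant construction (\ref{eq:incl}), and the positivity (\ref{eq:posdef-M0}) of the (here constant) material law operator $M(0)$. Granting these, the general well-posedness theory for evo-systems of \cite{Picard2015} applies directly; equivalently, since this is the Friedrichs case $M_{0}=M(0)\ge c_{0}>0$, one may run the semigroup argument sketched around (\ref{eq:sg-form}), which already supplies Hadamard well-posedness together with causality.

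First I would establish that $A$ is skew-selfadjoint. By construction $A$ is the descendant of the (skew-selfadjoint) asymmetric-elasticity mother operator $\left(\begin{smallmatrix}0 & -\dive\\ -\interior\grad & 0\end{smallmatrix}\right)$ under the bounded block multiplier $\left(\begin{smallmatrix}1 & 0\\ 0 & B\end{smallmatrix}\right)$ with $B$ as specified in Section~\ref{sec:An-Interface-Coupling}. By the Lemma together with the descendant identity following it, $A$ then has the block form $\left(\begin{smallmatrix}0 & -\tilde C^{*}\\ \tilde C & 0\end{smallmatrix}\right)$ with $\tilde C:=\overline{BC}$ closed and densely defined, hence is skew-selfadjoint --- provided the hypothesis of the Lemma, namely that $C^{*}B^{*}$ (which here equals $\dive\,B^{*}$, since $(-\interior\grad)^{*}=\dive$) be densely defined, is checked. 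This is the only genuinely non-routine point, and I would argue it directly: if $S$ is a smooth symmetric matrix field with compact support contained in the \emph{open} set $\Omega_{0}$ and $q$ a smooth scalar field with compact support contained in the \emph{open} set $\Omega_{1}$, then $B^{*}\left(\begin{smallmatrix}S\\ q\end{smallmatrix}\right)$ is the matrix field equal to $S$ on $\Omega_{0}$ and to $-qI$ on $\Omega_{1}$, which vanishes in a full $L^{2}$-neighbourhood of the interface $\dot\Omega_{0}\cap\dot\Omega_{1}$; hence no surface distribution occurs and $B^{*}\left(\begin{smallmatrix}S\\ q\end{smallmatrix}\right)\in D(\dive)$, with divergence the classical $\dive S$ on $\Omega_{0}$ and $-\grad q$ on $\Omega_{1}$. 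Since such pairs are dense in $L^{2}(\Omega_{0},\mathrm{sym}[\mathbb{R}^{3\times3}])\oplus L^{2}(\Omega_{1},\mathbb{R})$, the operator $\dive\,B^{*}$ is densely defined and the Lemma applies. Note this uses only that the interface is a Lebesgue null set --- so that the orthogonal decomposition (\ref{eq:deco-v}) and its vector/matrix analogues hold --- and imposes no regularity on the boundary or interface, because $\interior\grad$ is by definition the closure of the classical gradient on compactly supported test fields and $\dive=-(\interior\grad)^{*}$ its weak negative adjoint.

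Next I would check (\ref{eq:posdef-M0}). By hypothesis $\rho_{*,\Omega_{0}}$ and $c_{\Omega_{1}}$ are selfadjoint, continuous and strictly positive definite; and since $\kappa_{\Omega_{1}}$ and $C_{\Omega_{0}}$ are selfadjoint, continuous and strictly positive definite, so are $\kappa_{\Omega_{1}}^{-1}$ and $C_{\Omega_{0}}^{-1}$, bounded below by $\|\kappa_{\Omega_{1}}\|^{-1}$ and $\|C_{\Omega_{0}}\|^{-1}$ respectively. Hence the constant operator $M(\partial_{0}^{-1})=M(0)$ is selfadjoint, bounded, block-diagonal with respect to $H=L^{2}(\Omega,\mathbb{R}^{3})\oplus(L^{2}(\Omega_{0},\mathrm{sym}[\mathbb{R}^{3\times3}])\oplus L^{2}(\Omega_{1},\mathbb{R}))$, and each of its diagonal blocks is bounded below by a common $c_{0}>0$, so $M(0)\ge c_{0}>0$; as $M$ is constant, $M'(0)=0$, so (\ref{eq:posdef-rat}) holds for every $\rho>0$ as well.

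Finally, the substitution $V=\sqrt{M(0)}\,U$ turns (\ref{eq:evo-mix}) into $\bigl(\partial_{0}+\sqrt{M(0)^{-1}}\,A\,\sqrt{M(0)^{-1}}\bigr)V=\sqrt{M(0)^{-1}}\,F$, whose spatial operator is skew-selfadjoint (being congruent to $A$) and so, by Stone's theorem, generates a one-parameter unitary group; the associated fundamental solution restricted to $[0,\infty[$ is a $C_{0}$-semigroup, and convolution with it yields existence, uniqueness and continuous dependence of $U$ on $F$, i.e.\ Hadamard well-posedness, together with causal dependence on the data (the fundamental solution being supported in $[0,\infty[$). The main obstacle throughout is the density step for $\dive\,B^{*}$ in the skew-selfadjointness argument: one must be sure that inserting the cut-offs and embeddings that make up $B$ does not destroy the domain of the weak divergence, which is exactly what the vanishing of the relevant matrix fields near the null-set interface secures; everything else is bookkeeping.
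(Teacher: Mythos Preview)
Your proposal is correct and follows essentially the same route as the paper: the theorem is a summary of the preceding developments, with well-posedness and causality following from the skew-selfadjointness of $A$ obtained via the mother--descendant construction together with the strict positive definiteness of $M(0)$, after which either the general theory of \cite{Picard2015} or the semigroup argument around (\ref{eq:sg-form}) applies. You in fact make explicit the one step the paper leaves implicit, namely the verification that $\dive\,B^{*}$ is densely defined (the hypothesis of the Lemma), and your argument via test fields compactly supported in $\Omega_{0}$ respectively $\Omega_{1}$ is the natural one.
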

\begin{rem}
~
\end{rem}
\begin{enumerate}
\item Since $M\left(0\right)\gg0$, we could construct a fundamental solution
of $\partial_{0}+\sqrt{M\left(0\right)}^{-1}A\sqrt{M\left(0\right)}^{-1}$,
which in turn is obtained from the unitary group 
\[
\left(\exp\left(-t\,\sqrt{M\left(0\right)}^{-1}A\sqrt{M\left(0\right)}^{-1}\right)\right)_{t\in\mathbb{R}}
\]
as described above.
\item We note that we may actually allow for completely general \textendash{}
say, for simplicity, rational \textendash{} material laws as long
as condition (\ref{eq:posdef-rat}) is warranted. The above simple
choice has been used as a more approachable illustrating example,
which links up more explicitly with cases considered elsewhere.
\end{enumerate}
\bibliographystyle{siamplain}

\end{document}